\numberwithin{equation}{section}
\numberwithin{figure}{section}
\theoremstyle{plain}
\newtheorem{thm}{\protect\theoremname}[section]
\theoremstyle{remark}
\newtheorem{rem}[thm]{\protect\remarkname}
\theoremstyle{plain}
\newtheorem{cor}[thm]{\protect\corollaryname}
\theoremstyle{definition}
\newtheorem{example}[thm]{\protect\examplename}
\theoremstyle{remark}
\newtheorem*{conclusion*}{\protect\conclusionname}
\providecommand{\conclusionname}{Conclusion}
\providecommand{\corollaryname}{Corollary}
\providecommand{\examplename}{Example}
\providecommand{\remarkname}{Remark}
\providecommand{\theoremname}{Theorem}
\begin{document}
\subjclass[2020]{Primary 81P45; Secondary 15A60, 47A30, 47A63}
\title[graph based Operator Inequalities]{Graph Structured Operator Inequalities and Tsirelson-Type Bounds}
\begin{abstract}
We establish operator norm bounds for bipartite tensor sums of self-adjoint
contractions. The inequalities generalize the analytic structure underlying
the Tsirelson and CHSH bounds, giving dimension-free estimates expressed
through commutator and anticommutator norms. A graph based formulation
captures sparse interaction patterns via constants depending only
on graph connectivity. The results link analytic operator inequalities
with quantum information settings such as Bell correlations and network
nonlocality, offering closed-form estimates that complement semidefinite
and numerical methods.
\end{abstract}

\author{James Tian}
\address{Mathematical Reviews, 535 W. William St, Suite 210, Ann Arbor, MI
48103, USA}
\email{james.ftian@gmail.com}
\keywords{Tsirelson bound, operator inequalities, quantum correlations, graph
structures, tensor sums}

\maketitle
\tableofcontents{}

\section{Introduction}

We study universal operator norm bounds for bipartite tensor sums
\[
B=\sum^{m}_{i=1}x_{i}\otimes y_{i},
\]
where each $x_{i}$ and $y_{i}$ is a self-adjoint contraction. The
starting point is the simple identity underlying the CHSH and Tsirelson
bounds: for self-adjoint unitaries $A_{0},A_{1},B_{0},B_{1}$ with
$\left[A_{i},B_{j}\right]=0$ for all $i,j$, 
\begin{equation}
\mathcal{B}:=A_{0}B_{0}+A_{0}B_{1}+A_{1}B_{0}-A_{1}B_{1}
\end{equation}
satisfies 
\begin{equation}
\mathcal{B}^{2}=4I-\left[A_{0},A_{1}\right]\left[B_{0},B_{1}\right],\label{eq:1-2}
\end{equation}
which yields the Tsirelson bound $2\sqrt{2}$ once commutator norms
are estimated \cite{PhysRevLett.23.880}. This identity, first noted
by Cirel'son (Tsirelson), remains central in the modern analysis of
Bell inequalities and quantum correlations \cite{MR577178,MR798740,MR1907485,MR2515619,PhysRevA.83.032123}.
Indeed, taking norms in \prettyref{eq:1-2} gives $\left\Vert \mathcal{B}\right\Vert ^{2}=\left\Vert \mathcal{B}^{2}\right\Vert \leq4+\left\Vert \left[A_{0},A_{1}\right]\right\Vert \left\Vert \left[B_{0},B_{1}\right]\right\Vert \leq4+2\cdot2=8$,
so $\left\Vert \mathcal{B}\right\Vert \leq2\sqrt{2}$.

The same commutator/anticommutator expansion has intrinsic operator-theoretic
value. Sharp inequalities for commutators and anticommutators in unitarily
invariant norms have been studied extensively, from classical matrix
analysis to recent refinements. Examples include bounds by Bhatia
and Kittaneh, the Böttcher-Wenzel inequality and its variants, and
work on commutator estimates for normal or positive operators \cite{MR1607862,MR2446625}.
The proofs developed here draw directly on this analytic approach
but remain fully dimension-free.

Operator space and noncommutative probability techniques also connect
norm control of mixed products to the structure of quantum correlations.
Results such as the operator-space Grothendieck theorem, noncommutative
Khintchine inequalities, and analyses of XOR games show how tensor
norms and commutators govern achievable correlation strengths \cite{MR1930886,MR1150376,MR2006539}.
These perspectives link operator inequalities with the geometry of
quantum information theory.

In quantum information applications, Bell inequalities now extend
far beyond the two-setting CHSH case. Device independent protocols,
semidefinite-programming hierarchies, and network or many-body generalizations
provide powerful but often computationally heavy tools. The inequalities
presented here offer a complementary, closed-form approach: graph-sensitive,
dimension-free estimates that directly quantify the role of commutators
and anticommutators. For a fixed choice of coefficients and observables,
the same Bell operators $B_{c}$ appear as objects of interest in
semidefinite-programming relaxations, device-independent quantum key
distribution, and graph-based Bell tests; see, for instance, \cite{PhysRevLett.98.010401,RevModPhys.86.419,PhysRevA.78.062112,MR3036982}.

A second motivation arises from structured interactions. In spin system
and quantum network models, operators such as $\sigma_{x}\otimes\sigma_{x}+\sigma_{y}\otimes\sigma_{y}+\sigma_{z}\otimes\sigma_{z}$
represent Heisenberg type couplings, while general sums $\sum_{i}x_{i}\otimes y_{i}$
appear in lattice or graph coupled systems. The graph based inequalities
developed here provide explicit norm estimates that scale with the
interaction pattern via a simple combinatorial parameter. Concretely,
we establish complete graph bounds of the form 
\[
\left\Vert B\right\Vert ^{2}\leq m+\frac{1}{2}\sum_{i<j}\left(\left\Vert \left[x_{i},x_{j}\right]\right\Vert \left\Vert \left[y_{i},y_{j}\right]\right\Vert +\left\Vert \left\{ x_{i},x_{j}\right\} \right\Vert \left\Vert \left\{ y_{i},y_{j}\right\} \right\Vert \right),
\]
with equality for anticommuting Clifford families. The argument extends
directly to weighted sums
\[
\left\Vert B_{c}\right\Vert ^{2}\leq\sum_{i}c^{2}_{i}+\sum_{i<j}\left|c_{i}c_{j}\right|\phi_{ij},
\]
and to sparse graphs satisfying an edge domination condition that
controls non-edge interactions by averages over neighboring edges.

The graph based inequalities form the second major theme of this work.
They show that the operator norm of $B$ can be bounded by the “local”
commutation structure encoded in a graph $G=\left(V,E\right)$, with
a constant $C\left(G\right)=\frac{2\left(m-1\right)}{\delta}-1$ depending
only on the minimum degree $\delta$. This gives a quantitative link
between operator-norm growth and graph connectivity: dense graphs
recover the complete graph constant $C\left(G\right)=1$, while sparse
graphs yield controlled relaxations. The same reasoning extends to
the weighted setting, where each term carries a scalar amplitude $c_{i}$,
producing a unified framework that interpolates between universal
and graph local inequalities.

These graph based results show how combinatorial sparsity constrains
noncommutative norm growth. They align with recent developments in
graph-theoretic quantum correlations, network Bell inequalities, and
the study of commuting graph structures in operator algebras \cite{MR3036982,MR4018517,MR1450565}.
They also parallel the role of graph Laplacians and adjacency operators
in noncommutative harmonic analysis and matrix-valued inequalities
\cite{MR1477662,MR3837109}.

The resulting inequalities yield two types of consequences. First,
they provide explicit, analytic upper bounds on bipartite correlators
$\sup_{\rho}\left|tr\left(\rho B_{c}\right)\right|$ in terms of noncommutativity,
matching the Tsirelson value in the Clifford case. Second, they yield
graph dependent bounds that show how the degree structure limits or
amplifies collective correlations, allowing one to infer the presence
of many substantially noncommuting pairs from an observed Bell value.
This directly connects to current work on graph based nonlocality
and self-testing schemes using anticommuting observables \cite{PhysRevLett.117.070402,MR2090174}.

Our results thus complement existing semidefinite and numerical frameworks
by providing simple, verifiable operator inequalities that (i) scale
transparently with the number of settings, (ii) make commutator and
anticommutator dependence explicit, and (iii) adapt naturally to weighted
or sparse architectures. They also offer quick analytical estimates
for tensor-sum operators in quantum information and operator theory,
where numerical optimization may be unnecessary or infeasible. For
background we refer to standard texts in quantum information, operator
theory, and operator spaces, see e.g., \cite{MR1976867,Watrous_2018}.
In addition, graph-theoretic uncertainty relations and Lovasz-type
bounds for families of (anti)commuting observables have recently been
obtained in \cite{MR4613815,MR4751492}, where sums of squared expectations
$\sum_{i}\langle A_{i}\rangle^{2}$ are controlled via an (anti)commutativity
graph and its Lovasz theta number or semidefinite refinements. The
present work is complementary: instead of bounding $\sum_{i}\langle A_{i}\rangle^{2}$,
we give explicit commutator/anticommutator-based bounds on the operator
norm of tensor sums $\sum_{i}c_{i}x_{i}\otimes y_{i}$, including
weighted and graph structured variants for arbitrary self-adjoint
contractions. 

\section{Preliminaries}\label{sec:prelim}

We collect the basic graph and operator-theoretic notation used throughout
the paper. The results in the subsequent sections are stated for families
of self-adjoint contractions indexed by $\{1,\dots,m\}$, and the
graph structured bounds are formulated in terms of a simple graph
on this index set. For the convenience of readers from either operator
theory or graph theory, we state the relevant conventions explicitly.

\subsection{Graphs on the index set}

\label{subsec:prelim-graphs}

Fix an integer $m\geq1$ and set 
\[
V=\{1,\dots,m\}.
\]
A (simple, undirected) graph on $V$ is a pair $G=(V,E)$ where the
edge set $E\subseteq\{\{i,j\}:1\leq i<j\leq m\}$ contains no loops
and no multiple edges. For $i\in V$, the neighbor set is 
\[
N(i)=\{j\in V:\{i,j\}\in E\},
\]
the degree is $\deg(i)=|N(i)|$, and the minimum degree is 
\[
\delta=\min_{i\in V}\deg(i).
\]

We encode interaction patterns by such graphs on the index set. Here
vertex $i\in V$ labels the $i$-th summand $x_{i}\otimes y_{i}$
in the tensor sum $B=\sum^{m}_{i=1}x_{i}\otimes y_{i}$. Given $G=(V,E)$,
an edge $\{i,j\}\in E$ means that the pair $(i,j)$ is treated as
a directly coupled interaction in our estimates, with its contribution
measured by the quantity $\phi_{ij}$ defined below. Non-edge pairs
$\{i,j\}\notin E$ are not discarded, but are controlled indirectly
via the edge-domination condition in \prettyref{thm:c1}. The sparsity
or density of $G$ is quantified by the minimum degree $\delta$,
which governs the constant $C(G)$ in our graph-dependent bounds.

\subsection{Operator notation}

\label{subsec:prelim-ops}

Let $H$ and $K$ be complex Hilbert spaces. We write $B(H)$ for
the bounded operators on $H$ and use $\left\Vert \cdot\right\Vert $
for the operator norm. An operator $x\in B(H)$ is a self-adjoint
contraction if $x=x^{*}$ and $\left\Vert x\right\Vert \leq1$. For
$x,y\in B(H)$ we use the commutator and anticommutator conventions
\[
[x,y]=xy-yx,\qquad\{x,y\}=xy+yx.
\]
For self-adjoint operators $X,Y\in B(H)$, we write $X\leq Y$ in
the Löwner order to mean that $Y-X$ is positive semidefinite.

We view $B(H)\otimes B(K)$ as acting on the Hilbert space tensor
product $H\otimes K$ in the usual way, and we write $x\otimes y$
for the elementary tensor. In the graph-dependent estimates we use
the mixed interaction quantities: for $i\neq j$ and self-adjoint
contractions $x_{i},x_{j}\in B(H)$, $y_{i},y_{j}\in B(K)$, set 
\[
\phi_{ij}=\frac{1}{2}\left(\left\Vert \left[x_{i},x_{j}\right]\right\Vert \left\Vert \left[y_{i},y_{j}\right]\right\Vert +\left\Vert \left\{ x_{i},x_{j}\right\} \right\Vert \left\Vert \left\{ y_{i},y_{j}\right\} \right\Vert \right)\geq0.
\]
These $\phi_{ij}$ measure the size of the mixed terms in the expansion
of $B^{2}$ and provide the basic edge weights in the complete and
sparse graph bounds. 
\begin{rem}
A key feature of the sparse graph formulation is that the graph $G=(V,E)$
is not merely decorative: it encodes which pairs of settings are treated
as directly interacting in the operator bound. The vertex set $V=\{1,\dots,m\}$
indexes the summands $x_{i}\otimes y_{i}$, while the edge set $E$
selects a subset of index pairs whose mixed terms $\phi_{ij}$ are
retained explicitly in the estimate. Pairs $\{i,j\}\notin E$ are
not ignored, but are required to be controlled in terms of neighboring
edge pairs via the edge domination condition. In this way the choice
of $E$ specifies the “interaction skeleton’’ for the inequality,
and the graph constant $C(G)$ quantifies how much non-edge interaction
can be absorbed into edge contributions as a function of the minimum
degree $\delta$.

This point of view allows one to pass from universal inequalities
to graph-adapted ones. For a complete graph $K_{m}$ every pair is
an edge, and \prettyref{thm:b1} recovers a fully symmetric bound
in which all $\phi_{ij}$ appear. For a chain, star, or lattice, the
natural choice of $E$ is dictated by the geometry: nearest neighbours
in a one-dimensional chain, hub--leaf pairs in a star, or nearest
neighbors in a lattice or network. The sparse graph theorem \prettyref{thm:c1}
then shows that, provided non-edge interactions are dominated in the
prescribed sense, one can control $\left\Vert B\right\Vert $ using
only the $\phi_{ij}$ on this chosen edge set, at the cost of an explicit
factor $C(G)$ depending on the connectivity of $G$. This clarifies
how the combinatorial choice of $E$ influences the strength of the
operator inequality, and why the graph structure is intrinsic to the
second family of results.
\end{rem}

\section{Complete and Sparse Graph Operator Bounds}\label{sec:2}

This section develops the main operator norm inequalities for tensor
sums of self-adjoint contractions. We begin with the complete graph
case, where every pair of indices interacts, leading to a universal
bound that depends only on the pairwise commutator and anticommutator
norms. This gives a global inequality valid for all finite families
of self-adjoint contractions.

We then derive a sparse graph extension, in which the interactions
are restricted to the edges of a fixed graph $G=(V,E)$ on $\{1,\dots,m\}$.
Here $V$ indexes the summands $x_{i}\otimes y_{i}$ and $E$ specifies
which pairs $(i,j)$ are treated as edges whose interaction terms
$\phi_{ij}$ are retained explicitly in the bound. Under a natural
domination condition on the off-edge terms, the same operator norm
formulation yields a controlled estimate whose scaling depends on
the minimum degree of the graph. 
\begin{thm}
\label{thm:b1}Let $x_{1},\dots,x_{m}\in B\left(H\right)$ and $y_{1},\dots,y_{m}\in B\left(K\right)$
be self-adjoint contractions. Set 
\[
B=\sum^{m}_{i=1}x_{i}\otimes y_{i}.
\]
Then 
\begin{equation}
\left\Vert B\right\Vert ^{2}\leq m+\frac{1}{2}\sum_{i<j}\left(\left\Vert \left[x_{i},x_{j}\right]\right\Vert \left\Vert \left[y_{i},y_{j}\right]\right\Vert +\left\Vert \left\{ x_{i},x_{j}\right\} \right\Vert \left\Vert \left\{ y_{i},y_{j}\right\} \right\Vert \right).\label{eq:B1}
\end{equation}
In particular, if for each pair $\left(i,j\right)$ either $\left\{ x_{i},x_{j}\right\} =0$
or $\left\{ y_{i},y_{j}\right\} =0$, then 
\begin{equation}
\left\Vert B\right\Vert ^{2}\leq m+\frac{1}{2}\sum_{i<j}\left\Vert \left[x_{i},x_{j}\right]\right\Vert \left\Vert \left[y_{i},y_{j}\right]\right\Vert .\label{eq:B2}
\end{equation}
\end{thm}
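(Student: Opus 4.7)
The plan is to compute $B^{2}$ directly and exploit the fact that $B$ is self-adjoint, since each $x_{i}\otimes y_{i}$ is Hermitian. Then $\|B\|^{2}=\|B^{2}\|$, and the proof reduces to an algebraic expansion of $B^{2}$ followed by a triangle inequality. This mirrors, in a many-setting form, the classical Tsirelson identity $\mathcal{B}^{2}=4I-[A_{0},A_{1}][B_{0},B_{1}]$ highlighted in the introduction.

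The algebraic heart of the argument is a symmetric pairing of the off-diagonal terms. I would first write
\[
B^{2}=\sum_{i}x_{i}^{2}\otimes y_{i}^{2}+\sum_{i<j}\bigl(x_{i}x_{j}\otimes y_{i}y_{j}+x_{j}x_{i}\otimes y_{j}y_{i}\bigr),
\]
and then expand each mixed product via $x_{i}x_{j}=\tfrac{1}{2}(\{x_{i},x_{j}\}+[x_{i},x_{j}])$ and $x_{j}x_{i}=\tfrac{1}{2}(\{x_{i},x_{j}\}-[x_{i},x_{j}])$, together with the analogous identities for $y$. Expanding the tensor products of these sums and adding the two orderings $(i,j)$ and $(j,i)$, the mixed cross terms (anticommutator tensor commutator, and vice versa) cancel, leaving
\[
x_{i}x_{j}\otimes y_{i}y_{j}+x_{j}x_{i}\otimes y_{j}y_{i}=\tfrac{1}{2}\{x_{i},x_{j}\}\otimes\{y_{i},y_{j}\}+\tfrac{1}{2}[x_{i},x_{j}]\otimes[y_{i},y_{j}].
\]
This identity is the essential structural input that links the operator norm of $B$ to the pairwise commutators and anticommutators.

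The remaining step is to pass to norms. For the diagonal block, the bounds $0\leq x_{i}^{2}\leq I$ and $0\leq y_{i}^{2}\leq I$ give $0\leq\sum_{i}x_{i}^{2}\otimes y_{i}^{2}\leq mI$, so $\bigl\|\sum_{i}x_{i}^{2}\otimes y_{i}^{2}\bigr\|\leq m$. For the off-diagonal terms, the triangle inequality combined with the factorization $\|A\otimes C\|=\|A\|\,\|C\|$ on elementary tensors produces the sum on the right-hand side of (\ref{eq:B1}). The ``in particular'' case then follows automatically: if $\{x_{i},x_{j}\}=0$ or $\{y_{i},y_{j}\}=0$ for every pair, the anticommutator-anticommutator contribution vanishes term by term, leaving only the commutator part and yielding (\ref{eq:B2}).

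I do not expect any serious obstacle; the content is essentially the cancellation identity above plus triangle inequality. The only point requiring care is the bookkeeping needed to confirm that the mixed $\{x,x\}\otimes[y,y]$ and $[x,x]\otimes\{y,y\}$ cross terms cancel once the two orderings are summed. Beyond that, everything reduces to standard norm submultiplicativity on tensor factors and the contraction hypothesis on $x_{i},y_{i}$.
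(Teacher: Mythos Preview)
Your proposal is correct and follows essentially the same route as the paper: expand $B^{2}$, rewrite each cross term via the identity $x_{i}x_{j}\otimes y_{i}y_{j}+x_{j}x_{i}\otimes y_{j}y_{i}=\tfrac{1}{2}\{x_{i},x_{j}\}\otimes\{y_{i},y_{j}\}+\tfrac{1}{2}[x_{i},x_{j}]\otimes[y_{i},y_{j}]$, bound the diagonal by $mI$, and estimate each cross term by $\phi_{ij}$. The only cosmetic difference is that the paper works in the Loewner order (bounding $B^{2}\leq cI$ before taking norms) whereas you apply the triangle inequality directly to $\|B^{2}\|$; both yield the same bound.
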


\begin{proof}
Expand 
\[
B^{2}=\sum^{m}_{i=1}x^{2}_{i}\otimes y^{2}_{i}+\sum_{i<j}\left(x_{i}x_{j}\otimes y_{i}y_{j}+x_{j}x_{i}\otimes y_{j}y_{i}\right).
\]
Define 
\[
D:=\sum^{m}_{i=1}x^{2}_{i}\otimes y^{2}_{i},\qquad T_{ij}:=x_{i}x_{j}\otimes y_{i}y_{j}+x_{j}x_{i}\otimes y_{j}y_{i}\quad\left(i<j\right).
\]
Then $B^{2}=D+\sum_{i<j}T_{ij}$. 

Each $x_{i},y_{i}$ is a self-adjoint contraction, so $0\leq x^{2}_{i}\leq I$
and $0\leq y^{2}_{i}\leq I$ in the Löwner order. Hence $0\leq x^{2}_{i}\otimes y^{2}_{i}\leq I$
and therefore $D\leq mI.$

Next, write 
\[
T_{ij}=\frac{1}{2}\left(\left\{ x_{i},x_{j}\right\} \otimes\left\{ y_{i},y_{j}\right\} +\left[x_{i},x_{j}\right]\otimes\left[y_{i},y_{j}\right]\right).
\]
The anticommutators $\left\{ x_{i},x_{j}\right\} $ and $\left\{ y_{i},y_{j}\right\} $
are self-adjoint, so $\left\{ x_{i},x_{j}\right\} \otimes\left\{ y_{i},y_{j}\right\} $
is self-adjoint. The commutators $\left[x_{i},x_{j}\right]$ and $\left[y_{i},y_{j}\right]$
are skew-adjoint, and hence $\left[x_{i},x_{j}\right]\otimes\left[y_{i},y_{j}\right]$
is also self-adjoint. Thus $T_{ij}$ is self-adjoint. 

For any self-adjoint operator $A$, we have $-\left\Vert A\right\Vert I\leq A\leq\left\Vert A\right\Vert I$.
Using this, we get 
\begin{align*}
\left\{ x_{i},x_{j}\right\} \otimes\left\{ y_{i},y_{j}\right\}  & \leq\left\Vert \left\{ x_{i},x_{j}\right\} \right\Vert \left\Vert \left\{ y_{i},y_{j}\right\} \right\Vert I,\\
\left[x_{i},x_{j}\right]\otimes\left[y_{i},y_{j}\right] & \leq\left\Vert \left[x_{i},x_{j}\right]\right\Vert \left\Vert \left[y_{i},y_{j}\right]\right\Vert I.
\end{align*}
Therefore, 
\[
T_{ij}\leq\frac{1}{2}\left(\left\Vert \left\{ x_{i},x_{j}\right\} \right\Vert \left\Vert \left\{ y_{i},y_{j}\right\} \right\Vert +\left\Vert \left[x_{i},x_{j}\right]\right\Vert \left\Vert \left[y_{i},y_{j}\right]\right\Vert \right)I.
\]

Finally, 
\begin{align*}
B^{2} & =D+\sum_{i<j}T_{ij}\\
 & \leq mI+\frac{1}{2}\sum_{i<j}\left(\left\Vert \left\{ x_{i},x_{j}\right\} \right\Vert \left\Vert \left\{ y_{i},y_{j}\right\} \right\Vert +\left\Vert \left[x_{i},x_{j}\right]\right\Vert \left\Vert \left[y_{i},y_{j}\right]\right\Vert \right)I.
\end{align*}
Taking norms and using $\left\Vert B\right\Vert ^{2}=\left\Vert B^{2}\right\Vert $
(since $B^{*}=B$) gives the claimed inequality \eqref{eq:B1}. The
assertion \eqref{eq:B2} follows by omitting the anticommutator terms.
\end{proof}

The complete graph bound in \prettyref{thm:b1} treats the fully coupled
case, in which every pair contributes to the mixed-term expansion
of $B^{2}$. In many structured settings, however, only a subset of
pairs interact. For example, when the operators $x_{i}$ and $y_{i}$
are coupled according to a network or sparsity pattern. To capture
such partial coupling, we introduce a graph based formulation. The
next result shows that, under an edge domination hypothesis relating
non-edge to edge interactions, the same operator inequality method
extends to sparse graphs with an explicit combinatorial factor.
\begin{thm}
\label{thm:c1}Let $x_{1},\dots,x_{m}\in B\left(H\right)$ and $y_{1},\dots,y_{m}\in B\left(K\right)$
be self-adjoint contractions. Define 
\[
B=\sum^{m}_{i=1}x_{i}\otimes y_{i}.
\]
For $i\neq j$, set 
\[
\phi_{ij}=\frac{1}{2}\left(\left\Vert \left[x_{i},x_{j}\right]\right\Vert \left\Vert \left[y_{i},y_{j}\right]\right\Vert +\left\Vert \left\{ x_{i},x_{j}\right\} \right\Vert \left\Vert \left\{ y_{i},y_{j}\right\} \right\Vert \right)\geq0.
\]
Let $G$ be a simple undirected graph on $\left\{ 1,\dots,m\right\} $
with edge set $E\left(G\right)$, neighbor set $N\left(i\right)$,
degree $\deg\left(i\right)=\left|N\left(i\right)\right|$, and minimum
degree $\delta:=\min_{i}\deg\left(i\right)\geq1$. 

Assume the following edge domination condition holds for every $\left(i,j\right)\notin E\left(G\right)$:
\begin{equation}
\phi_{ij}\leq\frac{1}{\deg\left(i\right)}\sum_{k\in N\left(i\right)}\phi_{ik}+\frac{1}{\deg\left(j\right)}\sum_{k\in N\left(j\right)}\phi_{jk}.\label{eq:B3}
\end{equation}
Then 
\begin{equation}
\left\Vert B\right\Vert ^{2}\leq m+C\left(G\right)\sum_{\left(i,j\right)\in E\left(G\right)}\phi_{ij},\label{eq:B4}
\end{equation}
where 
\begin{equation}
C\left(G\right)=\frac{2\left(m-1\right)}{\delta}-1.\label{eq:bb3}
\end{equation}
\end{thm}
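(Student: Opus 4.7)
The plan is to inherit the operator-theoretic content from the proof of \prettyref{thm:b1}; that argument produces, as an intermediate step, the operator inequality
\[
B^{2} \leq mI + \Bigl(\sum_{i<j} \phi_{ij}\Bigr) I,
\]
with no graph structure involved. Once that inequality is in hand, the theorem reduces to the purely scalar estimate $\sum_{i<j} \phi_{ij} \leq C(G) \sum_{(i,j) \in E(G)} \phi_{ij}$, which, combined with $\|B\|^{2} = \|B^{2}\|$ (since $B$ is self-adjoint), immediately yields \eqref{eq:B4}.

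For the scalar estimate, I would split $\sum_{i<j} \phi_{ij}$ into an edge part and a non-edge part and apply the domination hypothesis \eqref{eq:B3} pointwise to each non-edge term. It is cleanest to pass momentarily to ordered pairs: summing \eqref{eq:B3} over every ordered pair $(i,j)$ with $i \neq j$ and $\{i,j\} \notin E(G)$, and exploiting the symmetry between the $i$- and $j$-terms on the right, gives
\[
\sum_{\substack{i \neq j \\ \{i,j\}\notin E(G)}} \phi_{ij} \leq 2 \sum_i \frac{m - 1 - \deg(i)}{\deg(i)} \sum_{k \in N(i)} \phi_{ik},
\]
since for each fixed $i$ the number of non-neighbors $j$ is exactly $m - 1 - \deg(i)$, each contributing the same inner quantity.

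Regrouping the right-hand side by edges, each edge $\{u,v\} \in E(G)$ receives the combined coefficient $\frac{m-1-\deg(u)}{\deg(u)} + \frac{m-1-\deg(v)}{\deg(v)}$, which the pointwise bound $\frac{m-1-\deg(i)}{\deg(i)} = \frac{m-1}{\deg(i)} - 1 \leq \frac{m-1}{\delta} - 1$ controls uniformly by $2\bigl(\frac{m-1}{\delta} - 1\bigr)$. Halving to return to unordered pairs and then adding back the original edge part produces the overall coefficient $1 + 2\bigl(\frac{m-1}{\delta} - 1\bigr) = \frac{2(m-1)}{\delta} - 1 = C(G)$ in front of the edge sum, which is exactly \eqref{eq:B4}.

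The main obstacle is bookkeeping rather than analysis: the ordered-versus-unordered pair conversion carries a factor of two that is easy to drop, and each edge must be credited with contributions from both endpoints when the domination inequality is aggregated across all non-edges. Once the counting is done correctly, the minimum-degree substitution $\deg(i) \geq \delta$ is routine, and no operator-theoretic input beyond what is already established in the proof of \prettyref{thm:b1} is required.
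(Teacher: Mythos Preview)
Your proposal is correct and follows essentially the same route as the paper: both start from the baseline operator inequality $\|B\|^{2}\le m+\sum_{i<j}\phi_{ij}$ inherited from \prettyref{thm:b1}, split the scalar sum into edge and non-edge parts, apply the domination hypothesis \eqref{eq:B3} termwise, use the count $m-1-\deg(i)$ of non-neighbors together with $\deg(i)\ge\delta$, and regroup the resulting double sum as twice the edge sum. The only cosmetic difference is that you pass to ordered pairs at the outset and regroup by edges before applying the minimum-degree bound, whereas the paper bounds the coefficient uniformly first and then collapses $\sum_{i}\sum_{k\in N(i)}\phi_{ik}$ to $2\sum_{E}\phi_{ij}$; the arithmetic is identical.
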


\begin{proof}
We follow the proof of \prettyref{thm:b1} by writing 
\[
B^{2}=D+\sum_{i<j}T_{ij},
\]
where $D=\sum^{m}_{i=1}x^{2}_{i}\otimes y^{2}_{i}\leq mI$, and $T_{ij}=\frac{1}{2}\left(\left\{ x_{i},x_{j}\right\} \otimes\left\{ y_{i},y_{j}\right\} +\left[x_{i},x_{j}\right]\otimes\left[y_{i},y_{j}\right]\right)$.
Then 
\[
\left\Vert T_{ij}\right\Vert \leq\frac{1}{2}\left(\left\Vert \left[x_{i},x_{j}\right]\right\Vert \left\Vert \left[y_{i},y_{j}\right]\right\Vert +\left\Vert \left\{ x_{i},x_{j}\right\} \right\Vert \left\Vert \left\{ y_{i},y_{j}\right\} \right\Vert \right)=\phi_{ij}.
\]
Since $T^{*}_{ij}=T_{ij}$, we get 
\[
T_{ij}\leq\left\Vert T_{ij}\right\Vert I\leq\phi_{ij}I.
\]
Summing over $i<j$ gives 
\[
\sum_{i<j}T_{ij}\leq\left(\sum_{i<j}\phi_{ij}\right)I.
\]
It follows that 
\[
B^{2}=D+\sum_{i<j}T_{ij}\leq mI+\left(\sum_{i<j}\phi_{ij}\right)I=\left(m+\sum_{i<j}\phi_{ij}\right)I.
\]
Taking operator norms (note $B^{*}=B$, so $\left\Vert B\right\Vert ^{2}=\left\Vert B^{2}\right\Vert $),
we get 
\begin{equation}
\left\Vert B\right\Vert ^{2}\leq m+\sum_{i<j}\phi_{ij}.\label{eq:bb5}
\end{equation}

Next, split the sum into edges and non-edges: 
\begin{equation}
\sum_{i<j}\phi_{ij}=\sum_{i<j,\:\left(i,j\right)\in E\left(G\right)}\phi_{ij}+\underset{=:S_{NE}}{\underbrace{\sum_{i<j\:\left(i,j\right)\not\in E\left(G\right)}\phi_{ij}}}.\label{eq:BB5}
\end{equation}

Applying the assumption \eqref{eq:B3} to each non-edge pair of indices
gives 
\[
S_{NE}\leq\sum_{i<j,\:\left(i,j\right)\notin E\left(G\right)}\left(\frac{1}{\deg\left(i\right)}\sum_{k\in N\left(i\right)}\phi_{ik}+\frac{1}{\deg\left(j\right)}\sum_{k\in N\left(j\right)}\phi_{jk}\right).
\]
Writing $F\left(i\right)=\frac{1}{\deg\left(i\right)}\sum_{k\in N\left(i\right)}\phi_{ik}$,
and summing over unordered non-edges, we have
\begin{align*}
S_{NE} & \leq\sum_{i<j,\:\left(i,j\right)\notin E\left(G\right)}\left(F\left(i\right)+F\left(j\right)\right)=\frac{1}{2}\sum_{i\neq j,\:\left(i,j\right)\notin E\left(G\right)}\left(F\left(i\right)+F\left(j\right)\right)\\
 & =\sum_{i\neq j,\:\left(i,j\right)\notin E\left(G\right)}F\left(i\right)=\sum_{i\neq j,\:\left(i,j\right)\notin E\left(G\right)}\frac{1}{\deg\left(i\right)}\sum_{k\in N\left(i\right)}\phi_{ik},
\end{align*}
since the ordered sum has each unordered pair twice, and the two terms
are symmetric. 

For fixed $i$, the number of non-neighbors of $i$ is $m-1-\deg\left(i\right)$,
thus 
\[
\sum_{i\neq j,\:\left(i,j\right)\notin E\left(G\right)}\frac{1}{\deg\left(i\right)}\sum_{k\in N\left(i\right)}\phi_{ik}=\sum^{m}_{i=1}\frac{m-1-\deg\left(i\right)}{\deg\left(i\right)}\sum_{k\in N\left(i\right)}\phi_{ik}
\]
and so 
\begin{equation}
S_{NE}\leq\sum^{m}_{i=1}\frac{m-1-\deg\left(i\right)}{\deg\left(i\right)}\sum_{k\in N\left(i\right)}\phi_{ik}.\label{eq:B5}
\end{equation}
Since $\deg\left(i\right)\geq\delta$ by assumption, one has 
\[
\frac{m-1-\deg\left(i\right)}{\deg\left(i\right)}=\frac{m-1}{\deg\left(i\right)}-1\leq\frac{m-1}{\delta}-1.
\]
Substitute this into \eqref{eq:B5}, 
\begin{equation}
S_{NE}\leq\left(\frac{m-1}{\delta}-1\right)\sum^{m}_{i=1}\sum_{k\in N\left(i\right)}\phi_{ik}=2\left(\frac{m-1}{\delta}-1\right)\sum_{\left(i,j\right)\in E\left(G\right)}\phi_{ij}.\label{eq:B6}
\end{equation}
Insert \eqref{eq:B6} into \eqref{eq:BB5}, we get 
\[
\sum_{i<j}\phi_{ij}\leq\left(\frac{2\left(m-1\right)}{\delta}-1\right)\sum_{\left(i,j\right)\in E\left(G\right)}\phi_{ij}.
\]

Finally, returning to \eqref{eq:bb5}, we have 
\[
\left\Vert B\right\Vert ^{2}\leq m+\sum_{i<j}\phi_{ij}\leq m+\left(\frac{2\left(m-1\right)}{\delta}-1\right)\sum_{\left(i,j\right)\in E\left(G\right)}\phi_{ij},
\]
which is the assertion. 
\end{proof}

\begin{cor}
\label{cor:b3}For complete graphs, $\delta=m-1$, hence $C\left(G\right)=1$
in \eqref{eq:B4}-\eqref{eq:bb3}. In this case, \prettyref{thm:c1}
reduces to \prettyref{thm:b1}. 
\end{cor}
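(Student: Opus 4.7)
The plan is direct verification, as the corollary amounts to bookkeeping applied to Theorem \ref{thm:c1} in the special case $G = K_m$. First, I would observe that in the complete graph on $\{1,\dots,m\}$ every vertex is adjacent to all others, so $\deg(i) = m-1$ for every $i$ and hence $\delta = m-1$. Substituting into the formula \eqref{eq:bb3} gives
\[
C(G) \;=\; \frac{2(m-1)}{m-1} - 1 \;=\; 1.
\]

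Second, I would note that the edge domination hypothesis \eqref{eq:B3} is vacuously satisfied, because there are no unordered pairs $(i,j)$ with $i \neq j$ outside $E(K_m)$. Thus Theorem \ref{thm:c1} applies unconditionally and specializes to
\[
\|B\|^2 \;\leq\; m + \sum_{(i,j)\in E(K_m)} \phi_{ij} \;=\; m + \sum_{i<j} \phi_{ij}.
\]
Unfolding the definition of $\phi_{ij}$ given in Theorem \ref{thm:c1} then reproduces exactly the right-hand side of \eqref{eq:B1}, so the estimate coincides with the conclusion of Theorem \ref{thm:b1}.

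There is no substantive obstacle here: the only thing to check is that the constant $C(G)$ collapses to $1$ and that the edge-sum runs over all pairs, both of which are immediate from $\deg(i) = m-1$. I would therefore keep the proof very short, essentially just the two displayed computations above, without any further analytic content.
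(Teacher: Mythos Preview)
Your proposal is correct and matches the paper's treatment: the paper states this corollary without proof, regarding the verification that $\delta=m-1$ forces $C(G)=1$ and that the edge sum then ranges over all pairs as immediate. Your additional remark that the edge domination hypothesis \eqref{eq:B3} is vacuous for $K_m$ is a worthwhile observation that the paper leaves implicit.
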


\begin{rem}
The edge domination condition \eqref{eq:B3} is the natural hypothesis
needed to pass from the complete graph estimate to a graph local estimate.
It requires that each non-edge weight $\phi_{ij}$ be controlled by
the average of the neighboring edge weights at $i$ and at $j$, so
that the collection $(\phi_{ij})$ can be bounded globally in terms
of its values on $E(G)$. In particular, without some domination assumption
of this kind, \prettyref{exa:3-5} below shows that there is no finite
constant $C(G)$ for which a bound of the form \eqref{eq:B4} can
hold for all self-adjoint contractions. Thus the role of \eqref{eq:B3}
is to prevent large non-edge interactions from escaping the edge sum
on the right-hand side. 
\end{rem}

\subsection*{Necessity of Edge Domination}

The assumption \eqref{eq:B3} is essential. Without it, no finite
constant $C\left(G\right)$ can make an inequality of the form \eqref{eq:B4}
valid for all self-adjoint contractions.
\begin{example}
\label{exa:3-5}Let $m=3$ and let $G$ be the path graph with edge
set 
\[
E\left(G\right)=\left\{ \left(1,2\right),\left(2,3\right)\right\} ,
\]
so that $\delta=1$. Take $H=K=\mathbb{C}^{2}$ and define 
\[
x_{1}=y_{1}=\sigma_{z},\qquad x_{2}=y_{2}=0,\qquad x_{3}=y_{3}=\sigma_{x},
\]
where $\sigma_{x},\sigma_{y},\sigma_{z}$ are Pauli matrices. Thus
the vertex $2$ is an ``inactive'' site (the corresponding summand
is zero), so that all edge interactions vanish while the non-edge
$(1,3)$ carries a nontrivial contribution. Then for the edge $\left(1,2\right)$
we have 
\[
\left[x_{1},x_{2}\right]=0=\left[y_{1},y_{2}\right],\qquad\left\{ x_{1},x_{2}\right\} =0=\left\{ y_{1},y_{2}\right\} ,
\]
and similarly for the edge $\left(2,3\right)$ since $x_{2}=y_{2}=0$,
so 
\[
\phi_{12}=0,\qquad\phi_{23}=0.
\]

For the non-edge $\left(1,3\right)$, 
\[
\left\{ x_{1},x_{3}\right\} =0=\left\{ y_{1},y_{3}\right\} ,\qquad\left[x_{1},x_{3}\right]=2x_{1}x_{3}=\left[y_{1},y_{3}\right],
\]
giving 
\[
\phi_{13}=\frac{1}{2}\left(\left\Vert \left[x_{1},x_{3}\right]\right\Vert \left\Vert \left[y_{1},y_{3}\right]\right\Vert +\left\Vert \left\{ x_{1},x_{3}\right\} \right\Vert \left\Vert \left\{ y_{1},y_{3}\right\} \right\Vert \right)=\frac{1}{2}\left(2\cdot2+0\right)=2.
\]
Thus 
\[
\sum_{\left(i,j\right)\in E\left(G\right)}\phi_{ij}=0,\qquad\sum_{i<j}\phi_{ij}=2.
\]

In this example 
\[
B=\sum^{3}_{i=1}x_{i}\otimes y_{i}=\sigma_{z}\otimes\sigma_{z}+\sigma_{x}\otimes\sigma_{x}.
\]
As in the two-spin example below (\prettyref{exa:3-6}), $B$ has
eigenvalues $0,\pm2$, so 
\[
\left\Vert B\right\Vert =2,\qquad\left\Vert B\right\Vert ^{2}=4.
\]
The baseline estimate \prettyref{eq:bb5} gives 
\[
\left\Vert B\right\Vert ^{2}\leq m+\sum_{i<j}\phi_{ij}=3+2=5,
\]
while any sparse graph inequality of the form \prettyref{eq:B4} would
read 
\[
\left\Vert B\right\Vert ^{2}\leq m+C\left(G\right)\sum_{\left(i,j\right)\in E\left(G\right)}\phi_{ij}=3+0=3,
\]
which is contradicted by $\left\Vert B\right\Vert ^{2}=4$ for all
finite $C\left(G\right)$. 
\begin{conclusion*}
The assumption \prettyref{eq:B3} prevents uncontrolled non-edge interactions
from violating the global operator bound. 
\end{conclusion*}
\end{example}

\subsection*{Sharpness and Equality Patterns}

\subsubsection*{The two-term case}

Consider the simplest nontrivial instance of the complete graph inequality
(\prettyref{thm:b1}), corresponding to $m=2$.

Let $x_{1},x_{2}\in B\left(H\right)$ and $y_{1},y_{2}\in B\left(K\right)$
be self-adjoint unitaries satisfying the anticommutation relations
\[
\left\{ x_{1},x_{2}\right\} =0=\left\{ y_{1},y_{2}\right\} .
\]
Define 
\[
S:=x_{1}\otimes y_{1}+x_{2}\otimes y_{2}.
\]
Then 
\[
S^{2}=\left(x_{1}\otimes y_{1}\right)^{2}+\left(x_{2}\otimes y_{2}\right)^{2}+\left(x_{1}\otimes y_{1}\right)\left(x_{2}\otimes y_{2}\right)+\left(x_{2}\otimes y_{2}\right)\left(x_{1}\otimes y_{1}\right).
\]
Since $x^{2}_{i}=y^{2}_{i}=I$, this simplifies to 
\[
S^{2}=2I+x_{1}x_{2}\otimes y_{1}y_{2}+x_{2}x_{1}\otimes y_{2}y_{1}.
\]
Because $\left\{ x_{1},x_{2}\right\} =0$ and $\left\{ y_{1},y_{2}\right\} =0$,
we have $x_{2}x_{1}=-x_{1}x_{2}$ and $y_{2}y_{1}=-y_{1}y_{2}$. Hence,
\[
S^{2}=2\left(I+W\right),\qquad W:=x_{1}x_{2}\otimes y_{1}y_{2}.
\]

Since $x_{i},y_{i}$ are self-adjoint unitaries with $x^{2}_{i}=y^{2}_{i}=I$
and $x_{2}x_{1}=-x_{1}x_{2}$, 
\[
\left(x_{1}x_{2}\right)^{2}=x_{1}\left(x_{2}x_{1}\right)x_{2}=-I,
\]
and similarly $\left(y_{1}y_{2}\right)^{2}=-I$. It follows that 
\[
W^{2}=\left(x_{1}x_{2}\right)^{2}\otimes\left(y_{1}y_{2}\right)^{2}=\left(-I\right)\otimes\left(-I\right)=I,
\]
and 
\[
W^{*}=\left(x_{1}x_{2}\otimes y_{1}y_{2}\right)^{*}=x_{2}x_{1}\otimes y_{2}y_{1}=W.
\]
Thus $W$ is also a self-adjoint unitary with spectrum $\left\{ \pm1\right\} $,
and so $\left\Vert I+W\right\Vert =2$. 

Note that 
\[
\left\Vert S\right\Vert ^{2}=\left\Vert S^{2}\right\Vert =2\left\Vert I+W\right\Vert =2\times2=4,
\]
and hence 
\[
\left\Vert S\right\Vert =2.
\]

Recall that for $m=2$, \prettyref{thm:b1} gives 
\[
\left\Vert x_{1}\otimes y_{1}+x_{2}\otimes y_{2}\right\Vert ^{2}\leq2+\frac{1}{2}\left(\left\Vert \left[x_{1},x_{2}\right]\right\Vert \left\Vert \left[y_{1},y_{2}\right]\right\Vert +\left\Vert \left\{ x_{1},x_{2}\right\} \right\Vert \left\Vert \left\{ y_{1},y_{2}\right\} \right\Vert \right).
\]
Using that $\left\Vert \left\{ x_{1},x_{2}\right\} \right\Vert =\left\Vert \left\{ y_{1},y_{2}\right\} \right\Vert =0$,
and $\left\Vert \left[x_{1},x_{2}\right]\right\Vert =\left\Vert \left[y_{1},y_{2}\right]\right\Vert =2$,
the right-hand side is precisely $4$, matching the direct computation
above. Therefore, equality is achieved whenever both pairs $\left(x_{1},x_{2}\right)$
and $\left(y_{1},y_{2}\right)$ are anticommuting self-adjoint unitaries. 

This shows that the two-term case of the complete graph theorem is
exact and sharp.

\subsubsection*{Complete graphs and dense regimes}

As noted in \prettyref{cor:b3}, for the complete graph $G=K_{m}$,
the minimum degree is $\delta=m-1$, hence $C\left(G\right)=1$. The
sparse graph case (\prettyref{thm:c1}) reduces to the baseline inequality
\[
\left\Vert B\right\Vert ^{2}\leq m+\sum_{i<j}\phi_{ij}.
\]
If all pairs satisfy 
\[
\left\{ x_{i},x_{j}\right\} =0=\left\{ y_{i},y_{j}\right\} ,\qquad x^{2}_{i}=y^{2}_{i}=I,
\]
then 
\[
\phi_{ij}=\frac{1}{2}\left\Vert \left[x_{i},x_{j}\right]\right\Vert \left\Vert \left[y_{i},y_{j}\right]\right\Vert =2,
\]
and hence 
\[
\sum_{i<j}\phi_{ij}=2\binom{m}{2}=m\left(m-1\right),
\]
which gives 
\[
\left\Vert B\right\Vert ^{2}\leq m+m\left(m-1\right)=m^{2},
\]
i.e., $\left\Vert B\right\Vert \leq m$. This bound is attained in
the canonical Pauli examples (Examples \ref{exa:3-6} and \ref{exa:3-7}). 
\begin{example}[Two Spin Pairs]
\label{exa:3-6}Let $x_{1}=\sigma_{z}$, $x_{2}=\sigma_{x}$, $y_{1}=\sigma_{z}$,
$y_{2}=\sigma_{x}$, where $\sigma_{x},\sigma_{z}$ are the standard
Pauli matrices satisfying $\left\{ \sigma_{x},\sigma_{z}\right\} =0$
and $\sigma^{2}_{i}=I$. Then 
\[
B=x_{1}\otimes y_{1}+x_{2}\otimes y_{2}=\sigma_{z}\otimes\sigma_{z}+\sigma_{x}\otimes\sigma_{x}.
\]
Each tensor factor is self-adjoint and unitary, and the two summands
anticommute. A direct computation shows that $B$ has eigenvalues
$0,\pm2$. Hence $\left\Vert B\right\Vert =2=m$. 
\end{example}

\begin{example}[Three Pauli Generators]
\label{exa:3-7} Let $x_{i}=y_{i}\in\left\{ \sigma_{x},\sigma_{y},\sigma_{z}\right\} $.
Then 
\[
B=\sigma_{x}\otimes\sigma_{x}+\sigma_{y}\otimes\sigma_{y}+\sigma_{z}\otimes\sigma_{z}.
\]
Acting on $\mathbb{C}^{2}\otimes\mathbb{C}^{2}$, this operator is
the standard Heisenberg exchange coupling. It decomposes the space
into a three-dimensional triplet subspace with eigenvalue $1$, and
a one-dimensional singlet subspace with eigenvalue $-3$. Consequently,
$\left\Vert B\right\Vert =3=m$, giving equality in the complete graph
bound. 
\end{example}

\begin{example}[General Clifford family]
Let $\gamma_{1},\dots,\gamma_{m}$ be Hermitian Clifford generators
acting on some finite-dimensional Hilbert space, satisfying 
\[
\left\{ \gamma_{i},\gamma_{j}\right\} =0\;\left(i\neq j\right),\qquad\gamma^{2}_{i}=I.
\]
Set $x_{i}=y_{i}=\gamma_{i}$. Then 
\[
\left(\gamma_{i}\otimes\gamma_{i}\right)\left(\gamma_{j}\otimes\gamma_{j}\right)=\gamma_{i}\gamma_{j}\otimes\gamma_{i}\gamma_{j}=\left(-\gamma_{j}\gamma_{i}\right)\otimes\left(-\gamma_{j}\gamma_{i}\right)=\left(\gamma_{j}\otimes\gamma_{j}\right)\left(\gamma_{i}\otimes\gamma_{i}\right),
\]
so the family $\left\{ \gamma_{i}\otimes\gamma_{i}\right\} $ is commuting
and self-adjoint unitary. Thus these operators are simultaneously
diagonalizable, with joint eigenvalues 
\[
\left(s_{1},\dots,s_{m}\right)\in\Sigma\subseteq\left\{ \pm1\right\} ^{m},
\]
where $\Sigma$ is the joint spectrum.

Set 
\[
B=\sum^{m}_{i=1}\gamma_{i}\otimes\gamma_{i}.
\]
In a joint eigenbasis, $B$ is diagonal with eigenvalues 
\[
\lambda_{s}=\sum^{m}_{i=1}s_{i},\qquad\left(s_{1},\dots,s_{m}\right)\in\Sigma,
\]
so 
\[
\left\Vert B\right\Vert =\max_{s\in\Sigma}\left|\sum\nolimits^{m}_{i=1}s_{i}\right|\leq m.
\]
In particular, this matches the complete graph bound $\left\Vert B\right\Vert \leq m$.
\end{example}

\subsubsection*{A non-Clifford qubit example}

The next example shows that the complete graph bound applies beyond
the commuting/anticommuting Pauli-Clifford regime: for some pairs
$(i,j)$ neither $[x_{i},x_{j}]$ nor $\{x_{i},x_{j}\}$ vanishes,
so $\phi_{ij}$ uses both terms. 
\begin{example}
\label{exa:nonClifford} Let $H=K=\mathbb{C}^{2}$ and let $\sigma_{x},\sigma_{y},\sigma_{z}$
be the Pauli matrices. Fix an angle $\theta\in\left(0,\frac{\pi}{2}\right)$
and define 
\[
x_{1}=y_{1}=\sigma_{z},\qquad x_{2}=y_{2}=\sigma_{x},\qquad x_{3}=y_{3}=\cos\theta\,\sigma_{z}+\sin\theta\,\sigma_{x}.
\]
Each $x_{i}$ and $y_{i}$ is a self-adjoint unitary. Set 
\[
B=\sum^{3}_{i=1}x_{i}\otimes y_{i}.
\]

The pairs $(x_{1},x_{2})$ and $(y_{1},y_{2})$ are the usual Pauli
generators: 
\[
\{x_{1},x_{2}\}=0=\{y_{1},y_{2}\},\qquad[x_{1},x_{2}]=2i\sigma_{y}=[y_{1},y_{2}],
\]
so they anticommute. In contrast, for the pair $(x_{1},x_{3})$ one
computes 
\[
\{x_{1},x_{3}\}=\sigma_{z}\bigl(\cos\theta\,\sigma_{z}+\sin\theta\,\sigma_{x}\bigr)+\bigl(\cos\theta\,\sigma_{z}+\sin\theta\,\sigma_{x}\bigr)\sigma_{z}=2\cos\theta\,I,
\]
\[
[x_{1},x_{3}]=\sigma_{z}\bigl(\cos\theta\,\sigma_{z}+\sin\theta\,\sigma_{x}\bigr)-\bigl(\cos\theta\,\sigma_{z}+\sin\theta\,\sigma_{x}\bigr)\sigma_{z}=2i\sin\theta\,\sigma_{y},
\]
and similarly 
\[
\{y_{1},y_{3}\}=2\cos\theta\,I,\qquad[y_{1},y_{3}]=2i\sin\theta\,\sigma_{y}.
\]
For $\theta\in\left(0,\frac{\pi}{2}\right)$ both $\{x_{1},x_{3}\}$
and $[x_{1},x_{3}]$ are nonzero, so $x_{1}$ and $x_{3}$ neither
commute nor anticommute, and the same holds for $y_{1},y_{3}$.

The remaining pair $(x_{2},x_{3})$ also satisfies 
\[
\{x_{2},x_{3}\}=2\sin\theta\,I,\qquad[x_{2},x_{3}]=-2i\cos\theta\,\sigma_{y},
\]
with identical formulas for $(y_{2},y_{3})$. Thus, for each unordered
pair $i\neq j$ we have 
\[
\left\Vert [x_{i},x_{j}]\right\Vert =2\left|n_{i}\times n_{j}\right|,\qquad\left\Vert \{x_{i},x_{j}\}\right\Vert =2\left|n_{i}\cdot n_{j}\right|,
\]
where $n_{i}\in\mathbb{R}^{3}$ is the Bloch direction of $x_{i}$,
and likewise for the $y_{i}$. In particular, for all pairs $(i,j)$
one checks $\left|n_{i}\times n_{j}\right|^{2}+\left|n_{i}\cdot n_{j}\right|^{2}=1$,
so 
\[
\phi_{ij}=\frac{1}{2}\Bigl(\left\Vert [x_{i},x_{j}]\right\Vert \left\Vert [y_{i},y_{j}]\right\Vert +\left\Vert \{x_{i},x_{j}\}\right\Vert \left\Vert \{y_{i},y_{j}\}\right\Vert \Bigr)=2
\]
for every $i\neq j$.

Applying the complete graph inequality \prettyref{thm:b1} with $m=3$
and $H=K=\mathbb{C}^{2}$ gives 
\[
\left\Vert B\right\Vert ^{2}\leq m+\sum_{1\leq i<j\leq3}\phi_{ij}=3+3\cdot2=9,\qquad\text{so}\qquad\left\Vert B\right\Vert \leq3.
\]
In this example the families $\{x_{i}\}$, $\{y_{i}\}$ are no longer
Clifford: for the pairs $(1,3)$ and $(2,3)$ neither the commutator
nor the anticommutator vanishes, and the bound uses both contributions
through the coefficients $\phi_{ij}$. 
\end{example}

\section{Weighted Inequalities}\label{sec:3}

This section extends the unweighted tensor‐sum bounds of \prettyref{sec:2}
to the weighted case. Each weight $c_{i}\in\mathbb{R}$ corresponds
to the strength of a local measurement setting, allowing the inequalities
to quantify attainable bipartite correlations and to infer structural
constraints on non-commutativity.

\subsection{Weighted Complete Graph Bound}

Let $x_{i}\in B\left(H\right)$, $y_{i}\in B\left(K\right)$ be self-adjoint
contractions and $c_{i}\in\mathbb{R}$. Define the weighted sum 
\[
B_{c}=\sum^{m}_{i=1}c_{i}x_{i}\otimes y_{i}.
\]
For each unordered pair $i\neq j$, set 
\[
\phi_{ij}=\frac{1}{2}\left(\left\Vert \left[x_{i},x_{j}\right]\right\Vert \left\Vert \left[y_{i},y_{j}\right]\right\Vert +\left\Vert \left\{ x_{i},x_{j}\right\} \right\Vert \left\Vert \left\{ y_{i},y_{j}\right\} \right\Vert \right)\geq0.
\]
 
\begin{thm}[weighted complete graph inequality]
\label{thm:C1} 
\begin{equation}
\left\Vert B_{c}\right\Vert ^{2}\leq\sum^{m}_{i=1}c^{2}_{i}+\sum_{i<j}\left|c_{i}c_{j}\right|\phi_{ij}.\label{eq:c1}
\end{equation}
 
\end{thm}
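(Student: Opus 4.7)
The plan is to replay the proof of \prettyref{thm:b1} while carrying the scalar weights through the square of $B_c$. Since $B_c$ is self-adjoint, it suffices to bound $B_c^2$ above in the Loewner order by a scalar multiple of the identity and then invoke $\|B_c\|^2 = \|B_c^2\|$ to conclude.

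First I would expand
\[
B_c^2 = \sum_{i=1}^m c_i^2\, x_i^2 \otimes y_i^2 + \sum_{i<j} c_i c_j\, T_{ij},
\]
with $T_{ij} = x_i x_j \otimes y_i y_j + x_j x_i \otimes y_j y_i$, exactly as in the unweighted case. The diagonal part is handled as before: each $x_i^2 \otimes y_i^2$ lies in the Loewner interval $[0, I]$, so $\sum_i c_i^2\, x_i^2 \otimes y_i^2 \leq \bigl(\sum_i c_i^2\bigr) I$.

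The only genuinely new point is that $c_i c_j$ may be negative. From the proof of \prettyref{thm:b1}, each $T_{ij}$ is self-adjoint with $\|T_{ij}\| \leq \phi_{ij}$, which yields the two-sided bound $-\phi_{ij} I \leq T_{ij} \leq \phi_{ij} I$. Consequently, regardless of the sign of $c_i c_j$,
\[
c_i c_j\, T_{ij} \leq |c_i c_j|\, \phi_{ij}\, I,
\]
and summing over $i<j$ together with the diagonal estimate gives
\[
B_c^2 \leq \Bigl(\sum_{i=1}^m c_i^2 + \sum_{i<j} |c_i c_j|\, \phi_{ij}\Bigr) I.
\]
Taking operator norms then delivers \eqref{eq:c1}. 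The main (and essentially only) obstacle will be this sign bookkeeping: the two-sided Loewner control on $T_{ij}$ is exactly what lets the same one-line estimate cover both signs of $c_i c_j$ and produce the absolute values in the final bound. No input beyond \prettyref{thm:b1} appears to be required.
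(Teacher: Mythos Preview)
Your proposal is correct and follows essentially the same approach as the paper: expand $B_c^2$, bound the diagonal by $\sum_i c_i^2$, and control each self-adjoint cross term $T_{ij}$ via $\|T_{ij}\|\le\phi_{ij}$. The only cosmetic difference is that you stay in the Loewner order throughout (as in the proof of \prettyref{thm:b1}) and use the two-sided bound $-\phi_{ij}I\le T_{ij}\le\phi_{ij}I$ to absorb the sign of $c_ic_j$, whereas the paper passes to norms and applies the triangle inequality $\|B_c^2\|\le\|\sum_i c_i^2 u_i^2\|+\sum_{i<j}|c_ic_j|\,\|T_{ij}\|$ to reach the same conclusion.
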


\begin{proof}
Let $u_{i}=x_{i}\otimes y_{i}$, then 
\[
B^{2}_{c}=\sum_{i}c^{2}_{i}u^{2}_{i}+\sum_{i<j}c_{i}c_{j}\left(u_{i}u_{j}+u_{j}u_{i}\right).
\]
Since each $x_{i},y_{i}$ is a self-adjoint contraction, $u^{2}_{i}=x^{2}_{i}\otimes y^{2}_{i}\leq I$.
Hence 
\[
\sum_{i}c^{2}_{i}u^{2}_{i}\leq\left(\sum_{i}c^{2}_{i}\right)I.
\]
For $i<j$, 
\[
u_{i}u_{j}+u_{j}u_{i}=\frac{1}{2}\left(\left\{ x_{i},x_{j}\right\} \otimes\left\{ y_{i},y_{j}\right\} +\left[x_{i},x_{j}\right]\otimes\left[y_{i},y_{j}\right]\right)=:T_{ij},
\]
and $T^{*}_{ij}=T_{ij}$. It follows that 
\[
\left\Vert T_{ij}\right\Vert \leq\frac{1}{2}\left(\left\Vert \left[x_{i},x_{j}\right]\right\Vert \left\Vert \left[y_{i},y_{j}\right]\right\Vert +\left\Vert \left\{ x_{i},x_{j}\right\} \right\Vert \left\Vert \left\{ y_{i},y_{j}\right\} \right\Vert \right)=\phi_{ij}.
\]
Since $B_{c}$ is self-adjoint, $\left\Vert B_{c}\right\Vert ^{2}=\left\Vert B^{2}_{c}\right\Vert $.
Applying the triangle inequality, we get 
\[
\left\Vert B_{c}\right\Vert ^{2}=\left\Vert B^{2}_{c}\right\Vert \leq\left\Vert \sum c^{2}_{i}u^{2}_{i}\right\Vert +\left\Vert \sum c_{i}c_{j}T_{ij}\right\Vert \leq\sum_{i}c^{2}_{i}+\sum_{i<j}\left|c_{i}c_{j}\right|\phi_{ij},
\]
which is \eqref{eq:c1}. 
\end{proof}

\begin{cor}
For any bipartite state $\rho$ on $H\otimes K$, 
\[
\left|\mathrm{tr}\left(\rho B_{c}\right)\right|\leq\left\Vert B_{c}\right\Vert \leq\sqrt{\sum_{i}c^{2}_{i}+\sum_{i<j}\left|c_{i}c_{j}\right|\phi_{ij}}.
\]
\end{cor}

\begin{proof}
As above, $\|\rho\|_{1}=1$, so $\left|\mathrm{tr}\left(\rho B_{c}\right)\right|\leq\|\rho\|_{1}\,\|B_{c}\|=\|B_{c}\|$,
and the upper bound is \prettyref{thm:C1}. 
\end{proof}

\subsection{Weighted Sparse Graph Bound}

Let $G=\left(V,E\right)$ be a simple undirected graph on $V=\left\{ 1,\dots,m\right\} $
with minimum degree $\delta\geq1$. The edges $E$ indicate allowed
pairwise interactions among measurement settings. 
\begin{thm}[weighted sparse graph Bell inequality]
\label{thm:C3}Suppose, for every non-edge pair $\left(i,j\right)$,
the following weighted edge-domination condition holds: 
\begin{equation}
\left|c_{i}c_{j}\right|\phi_{ij}\leq\frac{1}{\deg\left(i\right)}\sum_{k\in N\left(i\right)}\left|c_{i}c_{k}\right|\phi_{ik}+\frac{1}{\deg\left(j\right)}\sum_{k\in N\left(j\right)}\left|c_{j}c_{k}\right|\phi_{jk}.\label{eq:c2}
\end{equation}
Then, 
\[
\left\Vert B_{c}\right\Vert ^{2}\leq\sum_{i}c^{2}_{i}+C\left(G\right)\sum_{\left(i,j\right)\in E\left(G\right)}\left|c_{i}c_{j}\right|\phi_{ij},
\]
where 
\[
C\left(G\right)=\frac{2\left(m-1\right)}{\delta}-1.
\]
 
\end{thm}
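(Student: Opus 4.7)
\textbf{Proof plan for \prettyref{thm:C3}.} The strategy is to reproduce the two-stage argument used for \prettyref{thm:c1}, replacing the unweighted quantity $\phi_{ij}$ by its weighted analog $\Psi_{ij}:=|c_{i}c_{j}|\phi_{ij}$. The weighted baseline from \prettyref{thm:C1} already gives
\[
\left\Vert B_{c}\right\Vert^{2}\leq\sum_{i}c_{i}^{2}+\sum_{i<j}\Psi_{ij},
\]
so the task reduces to controlling $\sum_{i<j}\Psi_{ij}$ by $\sum_{(i,j)\in E(G)}\Psi_{ij}$ alone, with the combinatorial constant $C(G)$.

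First, I would split the pair sum along the graph,
\[
\sum_{i<j}\Psi_{ij}=\sum_{(i,j)\in E(G)}\Psi_{ij}+S_{NE},\qquad S_{NE}:=\sum_{i<j,\,(i,j)\notin E(G)}\Psi_{ij},
\]
and apply the weighted edge-domination hypothesis \eqref{eq:c2} termwise to bound each non-edge $\Psi_{ij}$ by the averages $F(i):=\deg(i)^{-1}\sum_{k\in N(i)}\Psi_{ik}$ and $F(j)$. Using the symmetry $F(i)+F(j)$ together with the counting identity used in the proof of \prettyref{thm:c1} (passing from unordered non-edge pairs to ordered pairs, then back to per-vertex sums), this yields
\[
S_{NE}\leq\sum_{i=1}^{m}\frac{m-1-\deg(i)}{\deg(i)}\sum_{k\in N(i)}\Psi_{ik}.
\]

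Next, the uniform minimum-degree bound $\deg(i)\geq\delta$ gives $\frac{m-1-\deg(i)}{\deg(i)}\leq\frac{m-1}{\delta}-1$, and the handshake identity $\sum_{i}\sum_{k\in N(i)}\Psi_{ik}=2\sum_{(i,j)\in E(G)}\Psi_{ij}$ collapses the resulting double sum into twice the edge sum. Combining these gives
\[
S_{NE}\leq 2\Bigl(\frac{m-1}{\delta}-1\Bigr)\sum_{(i,j)\in E(G)}\Psi_{ij},
\]
and adding back the edge contribution produces $\sum_{i<j}\Psi_{ij}\leq C(G)\sum_{(i,j)\in E(G)}\Psi_{ij}$ with $C(G)=\frac{2(m-1)}{\delta}-1$. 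Substituting into the weighted baseline bound concludes the proof.

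The argument is essentially mechanical once the weighted surrogate $\Psi_{ij}$ is introduced, so no genuine obstacle arises. The only delicate point is verifying that the edge-domination inequality \eqref{eq:c2} is preserved under the ordered-to-unordered rewriting; concretely, one must check that each non-edge pair contributes $F(i)+F(j)$ symmetrically, so that the factor of $\tfrac12$ from unordering cancels exactly against the factor of $2$ obtained when reindexing by a single vertex. Once this bookkeeping is done, the minimum-degree estimate and the handshake identity combine cleanly, and no additional noncommutative input beyond \prettyref{thm:C1} is required.
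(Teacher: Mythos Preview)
Your proposal is correct and follows essentially the same approach as the paper: invoke the weighted baseline from \prettyref{thm:C1}, split into edge and non-edge sums, apply the weighted edge-domination hypothesis \eqref{eq:c2} termwise, and then repeat verbatim the counting/handshake argument from \prettyref{thm:c1} with $\Psi_{ij}=|c_ic_j|\phi_{ij}$ in place of $\phi_{ij}$. The paper's proof is identical in structure, so there is nothing to add.
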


\begin{proof}
From \prettyref{thm:C1},
\begin{equation}
B^{2}_{c}\leq\left(\sum_{i}c^{2}_{i}+\sum_{i<j}\left|c_{i}c_{j}\right|\phi_{ij}\right)I.\label{eq:c3}
\end{equation}
Split the pair sum into edges and non-edges:
\begin{equation}
\sum_{i<j}\left|c_{i}c_{j}\right|\phi_{ij}=\sum_{i<j,\:\left(i,j\right)\in E}\left|c_{i}c_{j}\right|\phi_{ij}+\underset{=S_{NE}}{\underbrace{\sum_{i<j,\:\left(i,j\right)\notin E}\left|c_{i}c_{j}\right|\phi_{ij}}}.\label{eq:c4}
\end{equation}

Apply \eqref{eq:c2} to each non-edge and sum over all such pairs,
we get 
\[
S_{NE}\leq\sum_{i<j,\:\left(i,j\right)\notin E}\left(\frac{1}{\deg\left(i\right)}\sum_{k\in N\left(i\right)}\left|c_{i}c_{k}\right|\phi_{ik}+\frac{1}{\deg\left(j\right)}\sum_{k\in N\left(j\right)}\left|c_{j}c_{k}\right|\phi_{jk}\right).
\]
A similar argument as in the proof of \prettyref{thm:c1} gives 
\[
S_{NE}\leq\sum^{m}_{i=1}\frac{m-1-\deg\left(i\right)}{\deg\left(i\right)}\sum_{k\in N\left(i\right)}\left|c_{i}c_{k}\right|\phi_{ik}.
\]

Since $\deg\left(i\right)\geq\delta$, 
\[
\frac{m-1-\deg\left(i\right)}{\deg\left(i\right)}\leq\frac{m-1}{\delta}-1.
\]
Moreover, 
\[
\sum_{i}\sum_{k\in N\left(i\right)}\left|c_{i}c_{k}\right|\phi_{ik}=2\sum_{\left(i,j\right)\in E}\left|c_{i}c_{j}\right|\phi_{ij}.
\]
Hence 
\begin{equation}
S_{NE}\leq2\left(\frac{m-1}{\delta}-1\right)\sum_{\left(i,j\right)\in E}\left|c_{i}c_{j}\right|\phi_{ij}.\label{eq:c5}
\end{equation}

Combining \eqref{eq:c3}--\eqref{eq:c5} gives 
\[
B^{2}_{c}\leq\left(\sum_{i}c^{2}_{i}+\left(\frac{2\left(m-1\right)}{\delta}-1\right)\sum_{\left(i,j\right)\in E\left(G\right)}\left|c_{i}c_{j}\right|\phi_{ij}\right)I,
\]
and taking norms yields the claim. 
\end{proof}

\begin{example}
For $G$ a star $\left(\delta=1\right)$, 
\[
\left\Vert B_{c}\right\Vert ^{2}\leq\sum_{i}c^{2}_{i}+\left(2m-3\right)\sum^{m}_{j=2}\left|c_{1}c_{j}\right|\phi_{1j}.
\]
For $G$ a chain $\left(\delta=1\right)$, 
\[
\left\Vert B_{c}\right\Vert ^{2}\leq\sum_{i}c^{2}_{i}+\left(2m-3\right)\sum^{m-1}_{i=1}\left|c_{i}c_{i+1}\right|\phi_{i,i+1}.
\]
In particular, if $\phi_{i,i+1}\ll1$ for all edges, then $\left\Vert B_{c}\right\Vert \approx\sqrt{\sum_{i}c^{2}_{i}}$
even for long chains. 
\end{example}

\begin{cor}
\label{cor:C5}Let $\beta=\sup_{\rho}\left|\text{tr}\left(\rho B_{c}\right)\right|$
be the observed Bell value (supremum over all bipartite states $\rho$).
Then for the complete graph case,
\begin{equation}
\sum_{i<j}\left|c_{i}c_{j}\right|\phi_{ij}\geq\beta^{2}-\sum_{i}c^{2}_{i},\label{eq:c7}
\end{equation}
and under the assumption \eqref{eq:c2}, 
\begin{equation}
\sum_{\left(i,j\right)\in E\left(G\right)}\left|c_{i}c_{j}\right|\phi_{ij}\geq\frac{1}{C\left(G\right)}\left(\beta^{2}-\sum_{i}c^{2}_{i}\right),\qquad C\left(G\right):=\frac{2\left(m-1\right)}{\delta}-1.\label{eq:c8}
\end{equation}
\end{cor}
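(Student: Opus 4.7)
The plan is to read Corollary~\ref{cor:C5} as the contrapositive rearrangement of Theorems~\ref{thm:C1} and~\ref{thm:C3}. Both upper bounds take the form $\|B_c\|^2 \leq \sum_i c_i^2 + A$, where $A$ is the relevant (possibly weighted or edge-restricted) pair sum; solving for $A$ produces the stated lower bounds, provided $\beta$ can be controlled by $\|B_c\|$.

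First I would verify that $\beta^2 \leq \|B_c\|^2$. Since $c_i \in \mathbb{R}$ and each $x_i, y_i$ is self-adjoint, $B_c$ is self-adjoint. Writing an arbitrary state as $\rho = \sum_k p_k |\psi_k\rangle\langle \psi_k|$ with $p_k \geq 0$ and $\sum_k p_k = 1$, and reusing the H\"older/triangle-inequality step from the proof of the previous corollary, one obtains $|tr(\rho B_c)| \leq \|B_c\|$ for every state $\rho$. Taking the supremum gives $|\beta| \leq \|B_c\|$, and squaring removes any sign ambiguity.

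Second, for the complete graph inequality \eqref{eq:c7} I would apply Theorem~\ref{thm:C1} to get
\[
\beta^2 \;\leq\; \|B_c\|^2 \;\leq\; \sum_i c_i^2 + \sum_{i<j} |c_i c_j|\,\phi_{ij},
\]
then subtract $\sum_i c_i^2$ from both sides. For the sparse graph inequality \eqref{eq:c8}, I would instead invoke Theorem~\ref{thm:C3} under hypothesis \eqref{eq:c2} to obtain
\[
\beta^2 \;\leq\; \|B_c\|^2 \;\leq\; \sum_i c_i^2 + C(G)\sum_{(i,j)\in E(G)} |c_i c_j|\,\phi_{ij},
\]
and divide by $C(G)$. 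The division is legitimate because $\delta \leq m-1$ forces $C(G) = 2(m-1)/\delta - 1 \geq 1 > 0$.

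There is no substantive obstacle here: once the two preceding theorems are granted, the argument is pure algebra. The only mild care needed is that $\beta$ is defined as a signed supremum rather than in absolute value, but squaring neutralizes the sign, and the bound $|\beta| \leq \|B_c\|$ is immediate from the contractivity already used in the previous corollary. No new operator inequalities are required.
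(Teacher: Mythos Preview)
Your proposal is correct and follows essentially the same route as the paper: bound $\beta$ by $\|B_c\|$ via H\"older, then rearrange Theorems~\ref{thm:C1} and~\ref{thm:C3}. You are in fact slightly more careful than the paper, noting the sign issue in $\beta$ and explicitly checking $C(G)>0$ before dividing.
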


\begin{proof}
Hölder's inequality gives $\left|\text{tr}\left(\rho B_{c}\right)\right|\leq\left\Vert B_{c}\right\Vert \left\Vert \rho\right\Vert _{1}=\left\Vert B_{c}\right\Vert $,
hence $\beta\leq\left\Vert B_{c}\right\Vert $. For complete graphs,
\prettyref{thm:C1} gives 
\[
\sum_{\left(i,j\right)\in E\left(G\right)}\left|c_{i}c_{j}\right|\phi_{ij}\geq\left\Vert B_{c}\right\Vert ^{2}-\sum_{i}c^{2}_{i}\geq\beta^{2}-\sum_{i}c^{2}_{i},
\]
as claimed in \eqref{eq:c7}. 

For sparse graphs, \prettyref{thm:C3} gives 
\[
C\left(G\right)\sum_{\left(i,j\right)\in E\left(G\right)}\left|c_{i}c_{j}\right|\phi_{ij}\geq\left\Vert B_{c}\right\Vert ^{2}-\sum_{i}c^{2}_{i}.
\]
Using $\beta\leq\left\Vert B_{c}\right\Vert $ again, we get 
\[
\sum_{\left(i,j\right)\in E\left(G\right)}\left|c_{i}c_{j}\right|\phi_{ij}\geq\frac{1}{C\left(G\right)}\left(\beta^{2}-\sum_{i}c^{2}_{i}\right),
\]
which is \eqref{eq:c8}.
\end{proof}

\begin{rem}
If $\beta^{2}\leq\sum_{i}c^{2}_{i}$, then the right-hand side is
non-positive. In that case the inequalities are trivial since the
left-hand sides are nonnegative by definition. One may state the bounds
equivalently as
\begin{align*}
\sum_{i<j}\left|c_{i}c_{j}\right|\phi_{ij} & \geq\max\left\{ 0,\beta^{2}-\sum c^{2}_{i}\right\} ,\\
\sum_{\left(i,j\right)\in E\left(G\right)}\left|c_{i}c_{j}\right|\phi_{ij} & \geq\max\left(0,\frac{1}{C\left(G\right)}\left(\beta^{2}-\sum c^{2}_{i}\right)\right).
\end{align*}
\end{rem}

\begin{cor}
\label{cor:C7}Fix a threshold $t>0$ and assume $\left|c_{i}\right|\leq c_{max}$
for all $i$. Let 
\[
\beta:=\sup_{\rho}\left|\mathrm{tr}\left(\rho B_{c}\right)\right|,\qquad S:=\max\left\{ 0,\beta^{2}-\sum_{i}c^{2}_{i}\right\} .
\]
Set $M:=4c^{2}_{max}$.
\begin{enumerate}
\item Complete graph case. Define 
\[
N_{t}:=\left|\left\{ \left(i,j\right):1\leq i<j\leq m,\:\left|c_{i}c_{j}\right|\phi_{ij}\geq t\right\} \right|.
\]
If $0<t<M$, then 
\begin{equation}
N_{t}\geq\max\left\{ 0,\frac{S-t\binom{m}{2}}{M-t}\right\} .\label{eq:C9}
\end{equation}
\item Sparse graph case. Let $G=\left(V,E\right)$ have minimum degree $\delta\geq1$
and $C\left(G\right)=\frac{2\left(m-1\right)}{\delta}-1$. Define
\[
N^{E}_{t}:=\left|\left\{ \left(i,j\right)\in E:\left|c_{i}c_{j}\right|\phi_{ij}\geq t\right\} \right|.
\]
If $0<t<M$, then 
\begin{equation}
N^{E}_{t}\geq\max\left\{ 0,\frac{\frac{S}{C\left(G\right)}-t\left|E\right|}{M-t}\right\} .\label{eq:C10}
\end{equation}
\end{enumerate}
\end{cor}

\begin{proof}
For self-adjoint contractions one has $\left\Vert \left[x_{i},x_{j}\right]\right\Vert \leq2$
and $\left\Vert \left\{ x_{i},x_{j}\right\} \right\Vert \leq2$, and
likewise for the $y$'s, hence $\phi_{ij}\leq4$. With $\left|c_{i}\right|\leq c_{max}$
this gives $\left|c_{i}c_{j}\right|\phi_{ij}\leq4c^{2}_{max}=M$.

Complete graph case. By \prettyref{cor:C5}, 
\[
\sum_{i<j}\left|c_{i}c_{j}\right|\phi_{ij}\geq S.
\]
Let $A=\left\{ \left(i,j\right):\left|c_{i}c_{j}\right|\phi_{ij}\geq t\right\} $,
so $\left|A\right|=N_{t}$. Then 
\[
\sum_{i<j}\left|c_{i}c_{j}\right|\phi_{ij}=\sum_{\left(i,j\right)\in A}\left|c_{i}c_{j}\right|\phi_{ij}+\sum_{\left(i,j\right)\notin A}\left|c_{i}c_{j}\right|\phi_{ij}\leq MN_{t}+t\left(\binom{m}{2}-N_{t}\right).
\]
Combining with the lower bound $S$ yields 
\[
S\leq t\binom{m}{2}+\left(M-t\right)N_{t},
\]
which gives \prettyref{eq:C9}.

Sparse graph case. By \prettyref{cor:C5} one has 
\[
\sum_{\left(i,j\right)\in E}\left|c_{i}c_{j}\right|\phi_{ij}\geq\frac{S}{C\left(G\right)}.
\]
Repeat the same argument but summing over edges of $G$ to obtain
\[
\frac{S}{C\left(G\right)}\leq t\left|E\right|+\left(M-t\right)N^{E}_{t},
\]
which gives \prettyref{eq:C10}. 
\end{proof}

\bibliographystyle{amsalpha}
\bibliography{ref}

\end{document}